\spnewtheorem{myclaim}{Claim}{\itshape}{}
\begin{document}
\title{On the Virtual Network Embedding polytope}
%
\author{Amal Benhamiche \inst{1} \and
Pierre Fouilhoux \inst{2} \and
Lucas Létocart \inst{2} \and Nancy Perrot \inst{1} \and Alexis Schneider \inst{1,2}}
\authorrunning{A. Benhamiche et al.}
%
\institute{Orange Research, F-92320 Châtillon, France \\
\email{\{amal.benhamiche, nancy.perrot, alexis.schneider\}@orange.com} \and
Université Sorbonne Paris Nord, CNRS,
Laboratoire d’Informatique de Paris Nord, LIPN,
F-93430 Villetaneuse, France \\
\email{\{letocart, fouilhoux, schneider\}@lipn.univ-paris13.fr}}
\maketitle              
\begin{abstract}
We initiate the polyhedral study of the Virtual Network Embedding (VNE) problem, which arises in modern telecommunication networks. We propose new valid inequalities for the so-called flow formulation. We then prove, through a dedicated flow decomposition algorithm, that these inequalities characterize the VNE polytope in the case of an embedding of a virtual edge on a substrate path. Preliminary experiments show that the new inequalities propose promising speedups for MIP solvers.

\keywords{Virtual Network Embedding \and Polyhedral study \and Mathematical Programming}
\end{abstract}
%
%
%


\section{Introduction}

Network virtualization is an emerging paradigm that enables Telecommunication Operators to share their infrastructure between multiple virtual networks \cite{anderson2005overcoming,chowdhury2011vineyard}. It is at the heart of network slicing in 5G networks, where each virtual network corresponds to a service and is tailored to meet its requirements, such as speed, security, or coverage \cite{alliance2016description}. Network slicing is essential for deploying new, promising services with specific characteristics, such as remote surgeries or autonomous vehicles.

The core combinatorial structure of network virtualization is known as the Virtual Network Embedding (VNE) problem.
The VNE problem can be defined as follows. 
We consider simple, connected, undirected graphs. 
Let us denote the virtual network as $\Graph_r = (V_r, E_r)$ with $n_r$ nodes; and the substrate network as $\Graph_s = (V_s, E_s)$ with $n_s$ nodes. 
Allocating the substrate resources to the virtual demands is to find a \textit{mapping} (also called an embedding) as follows:

\begin{definition}
    A mapping $m = (m_V, m_E)$ of $\Graph_r$ on $\Graph_s$ is a pair of functions where: 
   \begin{itemize}
       \item $m_V: V_r \rightarrow V_s$ is called the \textit{node placement};
       \item $m_E: E_r \rightarrow P_s$, is called the \textit{edge routing}, where $P_s$ is the set of loop-free paths of $\Graph_s$ and for each $\ebar = (\ubar,\vbar) \in E_r$, $m_E(\ebar)$ is a path of $\Graph_s$ whose endpoints are $m_V(\ubar)$ and $m_V(\vbar)$. 
   \end{itemize}
\end{definition}

In this work, we consider a \textit{one-to-one} node placement: each virtual node has to be placed on a different substrate node. A virtual node $\ubar \in V_r$ (resp. virtual edge $\ebar \in E_r$) has an integer demand $d_{\ubar}$ (resp. $d_{\ebar}$). A substrate node $u \in V_s$ (resp. a substrate edge $e \in E_s$) has an integer capacity  $c_u$ (resp. $c_e$). 
A mapping is said to be \textit{feasible} when the capacity constraints are satisfied, i.e., the sum of the demands of virtual nodes (resp., edges) being placed on a substrate node (resp., being routed on a substrate edge) is less than or equal to the capacity of this node (resp., edge). 

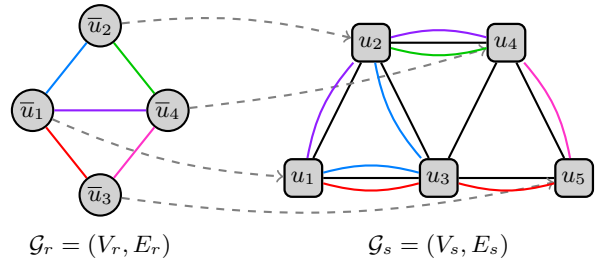
\begin{figure}
    \centering
    \begin{tikzpicture}[scale=0.9]
    \begin{scope}
        \node at (2,-2) {$\Graph_r = (V_r, E_r)$};
        \node[vnode] (v1) at (1,0){$\ubar_1$};
        \node[vnode] (v2) at (2,1.25){$\ubar_2$};
        \node[vnode] (v3) at (2,-1.25){$\ubar_3$};
        \node[vnode] (v4) at (3,0){$\ubar_4$};
    \end{scope}
    \draw[networkedge, blue] (v1) -- (v2);
    \draw[networkedge, purple] (v1) -- (v4);
    \draw[networkedge, green] (v2) -- (v4);
    \draw[networkedge, red] (v1) -- (v3);
    \draw[networkedge, pink] (v3) -- (v4);

    \begin{scope}[xshift=5.cm]
        \node at (2,-2) {$\Graph_s = (V_s, E_s)$};
        \node[snode] (u1) at (0,-1){$u_1$};
        \node[snode] (u2) at (1,1){$u_2$};
        \node[snode] (u3) at (2,-1){$u_3$};
        \node[snode] (u4) at (3,1){$u_4$};
        \node[snode] (u5) at (4,-1){$u_5$};
    \end{scope}
    \draw[networkedge] (u1) -- (u2);
    \draw[networkedge] (u1) -- (u3);
    \draw[networkedge] (u2) -- (u4);
    \draw[networkedge] (u2) -- (u3);
    \draw[networkedge] (u3) -- (u4);
    \draw[networkedge] (u4) -- (u5);
    \draw[networkedge] (u3) -- (u5);

    \draw[placementedge] (v1) to[bend right=10] (u1);
    \draw[placementedge] (v2) to[bend left = 10] (u2);
    \draw[placementedge] (v3) to[bend right= 10] (u5);
    \draw[placementedge] (v4) to[bend right = 5] (u4);

    \draw[routingedge, blue] (u2) to[bend right=15] (u3);
    \draw[routingedge, blue] (u3) to[bend right=15] (u1);
    \draw[routingedge, pink] (u4) to[bend left=15] (u5);
    \draw[routingedge, red] (u1) to[bend right=15] (u3);
    \draw[routingedge, red] (u3) to[bend right=15] (u5);
    \draw[routingedge, purple] (u4) to[bend right=15] (u2);
    \draw[routingedge, purple] (u2) to[bend right=15] (u1);
    \draw[routingedge, green] (u2) to[bend right=15] (u4);
        
    \end{tikzpicture}
    \caption{Example of a mapping of a virtual network (on the left) on a substrate network (on the right).}
    \label{fig:section1:example-vne}
\end{figure}

Figure \ref{fig:section1:example-vne} illustrates the embedding of a virtual graph on the left over a substrate graph on the right. The dotted arrows show the placement of virtual nodes. A virtual edge of a given color is routed using a substrate path of the same color. In this example, $m_V(\ubar_1) = u_1$, $m_V(\ubar_3) = u_5$ and $m_E(\ubar_1, \ubar_3) = \{(u_1, u_3), (u_3, u_5)\}$.

Using a substrate node unit (resp. edge) induces a usage cost $w_u$ (resp. $w_e$). The cost of a mapping $m$, denoted $W_m$, corresponds to the sum of the placement and routing costs: $W_m = \sum_{\ubar \in V_r} d_{\ubar} w_{m_V(\ubar)} + \sum_{\ebar \in E_r} \sum_{e \in m_E(\ebar)} d_{\ebar} w_e$. The VNE problem can be formulated as follows:

\begin{definition}
    Given a virtual network $\Graph_r = (V_r, E_r)$ with demands $d$, a substrate network $\Graph_s = (V_s, E_s)$ with capacities $c$ and costs $w$, the Virtual Network Embedding Problem (VNE) is to find the minimum cost embedding of $\Graph_r$ on $\Graph_s$
\end{definition}

The \vne problem is known to be \np-complete \cite{amaldi2016complexity,rost2020hardness}, even with uniform demands \cite{benhamiche2025complexity}. 
Heuristics and metaheuristics have been heavily investigated for the problem, see the survey in \cite{fischer2013virtual}. In contrast, exact methods such as integer programming have been less considered. A compact flow formulation is proposed for the directed case in \cite{chowdhury2011vineyard,melo2013optimal}, but it is known to suffer from weak linear relaxation, as discussed in \cite{moura2018branch}. In this paper, we exhibit new valid inequalities for this natural formulation and show that they permit to characterize the VNE polytope in the case of a virtual edge and a substrate path.


\section{Flow Formulation and VNE polytope}

The VNE Flow Formulation has been described for directed networks \cite{chowdhury2011vineyard,melo2013optimal}. In the following, we will adapt it for the undirected case by introducing bidirected flow variables for each substrate edge.

\subsection{The undirected Flow Formulation}

Although the edges are undirected, we consider that they are ordered sets (i.e., an arbitrary orientation is given to them), in the aim of formulating VNE as a flow problem.
Thus, the path routing a virtual edge can contain a substrate edge $(u,v) \in E_s$ in either directions $(u,v)$ or $(v,u)$, since the network is undirected.
To capture this, we introduce $\Graph'_s = (V_s, E'_s)$  the bi-directed version of the substrate network, where $E'_s = \bigcup_{(u,v)\in E_s} \{(u,v),(v,u)\}$. 
This bi-orientation technique is standard in flow formulations \cite{ahuja1988flow,raack2011cut}. Additionally, for a substrate node $u \in V_s$, let $\delta^+(u)$ (resp. $\delta^-(u)$) be the outgoing (ingoing) edges of node $u$ in $E'_s$. 

Two sets of binary variables are considered. The placement variable $x_{\ubar u}$ takes the value 1 if and only if the virtual node $\ubar \in V_r$ is placed on substrate node $u \in V_s$. The flow variable $y_{\ebar (u,v)}$ (resp. $y_{\ebar (v,u)}$) takes value 1 if and only if the path routing virtual edge $\ebar \in E_r$ uses substrate edge $(u,v) = e \in E_s$, from $u$ to $v$ (resp. from $v$ to $u$). 
Note that an easy pre-treatment is to set to 0 the variable $x_{\ubar u}$ (resp. $y_{\ebar e}$) if $d_{\ubar} > c_{u}$ (resp. $d_{\ebar e} > c_{e}$), for any $\ubar \in V_r, u \in V_s$ (resp. any $\ebar \in E_r, e \in E_s$). 
In what follows, we thus consider that $d_{\ebar} \le c_e$, for any $e \in E_s$; and $d_{\ubar} \le c_u$, $d_{\vbar} \le c_u$, for any $u \in V_s$.

The Flow Formulation $(FF)$ is the following.

\begin{subequations}
\footnotesize
\begin{align}
    \min \quad & \sum_{u \in V_s} \sum_{\ubar \in V_r}  d_{\ubar} w_u  x_{\ubar u} + \sum_{e \in E'_s} \sum_{\ebar \in E_r} d_{\ebar} w_e y_{\ebar e} \notag \\
    \text{s.t.} \quad 
    & \sum_{u \in V_s} x_{\ubar u} = 1 && \forall \ubar \in V_r  \label{ff1}\\
    & x_{\ubar u} - x_{\vbar u} = \sum_{e \in \delta^+(u)} y_{\ebar e} - \sum_{e \in \delta^-(u)} y_{\ebar e} && \forall (\ubar, \vbar) = \ebar \in E_r,\ \forall u \in V_s \label{ff2} \\
    & \sum_{\ubar \in V_r} x_{\ubar u} \leq 1 && \forall u \in V_s \label{ff3} \\
    & \sum_{\ebar \in E_r} d_{\ebar} (y_{\ebar (u,v)} + y_{\ebar (v,u)}) \leq c_e && \forall (u,v) \in E_s \label{ff5} \\
    & x_{\ubar u} \in \{0,1\} && \forall \ubar \in V_r, \forall u \in V_s,  \nonumber \\
    & y_{\ebar e} \in \{0,1\} && \forall \ebar \in E_r, \forall e \in E'_s \nonumber
\end{align}
\end{subequations}

The constraints (\ref{ff1}) ensure valid virtual node placement. The constraints (\ref{ff2}) are the flow conservation constraints and guarantee that each virtual edge is associated with a routing path in the substrate graph whose end nodes host its endpoints. Constraints (\ref{ff3}) ensure one-to-one node placement.
Constraints (\ref{ff5}) ensure that node edge capacities are respected.
Finally, the objective function expresses an embedding cost taking into account placement and routing costs.

\paragraph{The VNE polytope} For each mapping $m \in \Mapping$, where $\Mapping$ is the set of feasible mappings of $\Graph_r$ on $\Graph_s$, the incidence vector of $m$ (with variables $x$ and $y$) is denoted $\chi(m)$. We are interested in the description of the VNE polytope $\Fpolytope$, i.e., the convex hull of valid mappings of $\Graph_r$ on $\Graph_s$:
\begin{equation}
    \Fpolytope(\Graph_r, \Graph_s, d, c) = conv\{\chi(m), m \in \Mapping\}
\end{equation}

\subsection{Valid inequalities and equalities}

The fact that the Flow Formulation has a poor linear relaxation has been observed in \cite{moura2018branch}. 
In what follows, we propose the first valid inequalities for the VNE problem, to reinforce the formulation.

The following {\it flow departure inequalities} translate that, when a virtual node $\ubar$ is placed on a substrate node $u \in V_s$, then any virtual edge $(\ubar, \vbar) = \ebar \in E_r$ must be routed on an outgoing arc of $u$ in the bidirected version of the substrate network.

\begin{proposition}
    For any $\ebar \in E_r$, for any $u \in V_s$, the flow-departure inequality is valid: 
    \begin{equation} \label{cons-flow-departure}
        x_{\ubar u} \le \sum_{e \in \delta^{+}(u)} y_{\ebar e}
    \end{equation}
\end{proposition}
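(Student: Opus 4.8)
The plan is to prove validity vertex by vertex. Since $\Fpolytope$ is, by definition, the convex hull of the incidence vectors $\chi(m)$ of feasible mappings $m \in \Mapping$, and a linear inequality that holds at every extreme point of a polytope holds on the whole polytope, it suffices to verify (\ref{cons-flow-departure}) at an arbitrary $\chi(m)$. So I would fix a feasible mapping $m = (m_V, m_E) \in \Mapping$, write $\ebar = (\ubar, \vbar)$, fix $u \in V_s$, and argue on the integral vector $(x,y) = \chi(m)$.

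First I would dispose of the case $x_{\ubar u} = 0$: then the left-hand side is $0$, while the right-hand side $\sum_{e \in \delta^{+}(u)} y_{\ebar e}$ is a sum of binary variables and hence non-negative, so (\ref{cons-flow-departure}) holds trivially. The substantive case is $x_{\ubar u} = 1$, i.e.\ $m_V(\ubar) = u$. Here I would invoke the flow-conservation equality (\ref{ff2}) at $u$ for the edge $\ebar$,
\[
x_{\ubar u} - x_{\vbar u} = \sum_{e \in \delta^{+}(u)} y_{\ebar e} - \sum_{e \in \delta^{-}(u)} y_{\ebar e}.
\]
Because $\Graph_r$ is simple we have $\ubar \neq \vbar$, and the one-to-one placement constraint (\ref{ff3}) then forces $x_{\vbar u} = 0$ (node $u$ already hosts $\ubar$). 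Substituting and using $y \geq 0$ gives
\[
\sum_{e \in \delta^{+}(u)} y_{\ebar e} = x_{\ubar u} + \sum_{e \in \delta^{-}(u)} y_{\ebar e} \geq x_{\ubar u},
\]
which is exactly (\ref{cons-flow-departure}).

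I do not expect a real obstacle, but the one delicate point is the appeal to $x_{\vbar u} = 0$: this is where the one-to-one restriction (constraint (\ref{ff3}) together with the simplicity of $\Graph_r$) is essential. If both endpoints of $\ebar$ could be placed on $u$, the edge could be routed on the empty path and the inequality would fail, so the proof must lean on that hypothesis explicitly. As an alternative I would give the purely combinatorial reading: when $m_V(\ubar) = u$, the routing path $m_E(\ebar)$ starts at $u$ and, since its other endpoint $m_V(\vbar) \neq u$, must traverse at least one arc of $\delta^{+}(u)$; this reproves the bound directly from the definition of a feasible mapping.
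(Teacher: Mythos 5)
Your proof is correct, and it is essentially a formalization of the paper's own (one-sentence, informal) justification: the paper states the proposition without a formal proof, relying on the observation that if $\ubar$ sits on $u$ then the routing path of $\ebar$ must leave $u$, precisely because one-to-one placement forbids $\vbar$ from also sitting on $u$. Your reduction to integer points followed by flow conservation (\ref{ff2}) plus constraint (\ref{ff3}) — and your explicit remark that the inequality would fail without the one-to-one hypothesis — captures exactly this argument, so no further comparison is needed.
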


The following {\it flow continuity inequalities} translate that, when a virtual edge $(\ubar, \vbar) = \ebar \in E_r$ is routed on a substrate edge $(u, v) = e \in E'_s$, then either $\vbar$ is placed on $v$, or the path routing $\ebar$ contains another outgoing arc of $v$, that is not $(v, u)$.

\begin{proposition}
    For any $\ebar = (\ubar, \vbar) \in E_r$, for any $e = (u,v) \in E'_s$, the flow continuity inequality is valid:
    \begin{equation} \label{cons-flow-continuity} 
        y_{\ebar e} \le \sum_{e' \in \delta^+(v)\setminus (v,u)} y_{\ebar e'} + x_{\vbar v} 
    \end{equation}
\end{proposition}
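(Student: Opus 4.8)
The plan is to verify the flow continuity inequality directly on every feasible mapping $m \in \Mapping$, since it suffices to show that the incidence vector $\chi(m)$ satisfies it (the inequality then holds on the convex hull $\Fpolytope$). Fix $\ebar = (\ubar,\vbar) \in E_r$ and $e = (u,v) \in E'_s$. The left-hand side $y_{\ebar e}$ is binary, so the only nontrivial case is when $y_{\ebar e} = 1$, i.e.\ the substrate path $m_E(\ebar)$ traverses the arc $(u,v)$ entering $v$. I would then argue that the right-hand side is at least $1$ in this case.

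The key step is a case analysis on the vertex $v$ along the loop-free path $m_E(\ebar)$. First I would note that since $m_E(\ebar)$ is a path from $m_V(\ubar)$ to $m_V(\vbar)$ and it arrives at $v$ via the arc $(u,v)$, either $v$ is the terminal endpoint of the path or it is not. If $v$ is the endpoint, then $v = m_V(\vbar)$, which means $x_{\vbar v} = 1$ and the right-hand side is at least $1$. If $v$ is not the endpoint, then the path must leave $v$ along some outgoing arc $(v,w) \in \delta^+(v)$; because the path is loop-free it cannot immediately backtrack along $(v,u)$, so $w \neq u$, hence this departing arc lies in $\delta^+(v) \setminus \{(v,u)\}$ and the corresponding variable $y_{\ebar (v,w)}$ equals $1$. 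In either case the right-hand side is at least $1 \ge y_{\ebar e}$, establishing validity.

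I would remark that the loop-free assumption on the substrate paths (guaranteed by the definition of $P_s$) is essential: it is exactly what forbids the path from returning immediately to $u$ and thereby lets us exclude the reverse arc $(v,u)$ from the sum. The main obstacle, if any, is purely in making the ``path leaves $v$'' argument rigorous: I need the precise convention that the arcs of $m_E(\ebar)$ are recorded with their traversal orientation in the variables $y$, so that arriving at $v$ via $(u,v)$ and being an internal vertex forces exactly one outgoing arc of $v$ to be used by the path, and loop-freeness forbids that arc from being $(v,u)$. Once this orientation bookkeeping is pinned down, the verification is immediate and the inequality follows for $\chi(m)$ and hence for all of $\Fpolytope$.
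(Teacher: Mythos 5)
Your proof is correct and follows exactly the reasoning the paper sketches informally before the proposition: for any feasible mapping, if $\ebar$ traverses $(u,v)$, then either $v$ hosts $\vbar$ or the loop-free routing path must exit $v$ through an arc other than $(v,u)$. The paper states this justification without formalizing it, and your verification on incidence vectors (which then extends to the convex hull) is the natural rigorous version of that same argument.
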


Finally, let $L_s$ be the leaf nodes, i.e., the nodes with only one neighbor, of $\Graph_s$. For a leaf node $l \in L_s$, let $v_l$ its only neighbor in $V_s$. 
Interestingly, on such a leaf node, combining the flow departure and continuity inequalities with the flow conservation constraints results in the following \textit{leaf equality}:

\begin{proposition}
    For any virtual edge $(\ubar, \vbar) = \ebar \in E_r$, for any leaf node $l \in L_s$, the leaf equality is valid:
    \begin{equation} \label{cons-leaf-eq}
        y_{\ebar (l,v_l)} = x_{\ubar l}
    \end{equation}
\end{proposition}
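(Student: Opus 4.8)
The plan is to derive the leaf equality purely algebraically by combining the three relations that hold at the leaf node $l$, exploiting that $l$ has a single neighbour $v_l$. The crucial structural observation is that in the bidirected graph $\Graph'_s$ the only arcs incident to $l$ are $(l,v_l)$ and $(v_l,l)$, so that $\delta^+(l) = \{(l,v_l)\}$ and $\delta^-(l) = \{(v_l,l)\}$. This singleton structure will collapse every summation that appears in the flow-departure and flow-continuity inequalities when they are instantiated at $l$, turning them into simple scalar bounds.

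Concretely, I would first specialize the flow-conservation constraint \eqref{ff2} at $u=l$ for the edge $\ebar=(\ubar,\vbar)$, which reduces to the scalar identity
\[ x_{\ubar l} - x_{\vbar l} = y_{\ebar(l,v_l)} - y_{\ebar(v_l,l)}. \]
Next I would instantiate the flow-departure inequality \eqref{cons-flow-departure} at $u=l$; since $\delta^+(l)=\{(l,v_l)\}$, its right-hand side collapses to a single term, yielding $x_{\ubar l} \le y_{\ebar(l,v_l)}$, which is already the ``$\ge$'' direction of the claimed equality. For the reverse direction, the key step is to apply the flow-continuity inequality \eqref{cons-flow-continuity} not to the forward arc but to the ingoing arc $e=(v_l,l)$, i.e.\ with $u=v_l$ and $v=l$. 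Then the sum is taken over $\delta^+(l)\setminus(l,v_l)=\emptyset$, so the inequality degenerates to $y_{\ebar(v_l,l)} \le x_{\vbar l}$.

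Substituting this last bound into the conservation identity gives $y_{\ebar(l,v_l)} = x_{\ubar l} - x_{\vbar l} + y_{\ebar(v_l,l)} \le x_{\ubar l}$, the ``$\le$'' direction, and combining it with the departure bound forces equality. I expect the only genuinely non-routine step to be recognizing that the continuity inequality must be invoked on the reverse arc $(v_l,l)$ rather than on $(l,v_l)$: this is what produces the empty continuation set and hence the clean bound $y_{\ebar(v_l,l)}\le x_{\vbar l}$. Everything else follows mechanically once the singleton nature of $\delta^\pm(l)$ is used, and since the equality is obtained as a consequence of relations already shown valid for $\Fpolytope$, its validity is immediate.
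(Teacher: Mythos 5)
Your proof is correct and follows exactly the route the paper indicates: the paper states the leaf equality as the result of ``combining the flow departure and continuity inequalities with the flow conservation constraints'' (giving no further detail), and your derivation spells out precisely this combination --- departure at $l$ for the lower bound, continuity instantiated on the reverse arc $(v_l,l)$ (whose continuation set is empty at a leaf) for the bound $y_{\ebar(v_l,l)} \le x_{\vbar l}$, and conservation at $l$ to close the argument. No gap.
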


\subsection{Case of a single virtual edge}

In the remainder of the paper, we consider the embedding of a single virtual edge $\ebar = (\ubar, \vbar)$, which is the first natural step to study the VNE polytope. This case can be solved in polynomial time with the modified shortest path algorithm described in \cite{moura2018branch}. 


Interestingly, in that case, the $x$ variables play the role of arc-flow variables: for $u \in V_s$, the variable $x_{\ubar u}$ (resp. $x_{\vbar u}$) corresponds to a flow variable for $\ebar$ on arc $(\ubar, u)$ (resp. $(u, \vbar)$). The node placement inequalities then become flow inequalities constraints on the virtual nodes. 

This means that $(x,y)$ is a $\ubar, \vbar$-arc flow in an \textit{augmented network}. This augmented network is the bidirected version of the substrate network $\Graph'_s$, plus additional arcs: from $\ubar$ to $u$ and from $u$ to $\vbar$, for any $u \in V_s$. We denote this augmented network as $\Graph_s^a = (V_s \cup V_r, E_s^a)$.
Figure~\ref{fig:augmented-network} shows the augmented network (right) of a simple VNE instance (left).

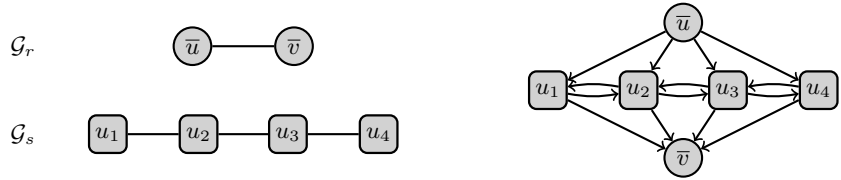
\begin{figure}[h]
\centering
    \begin{tikzpicture}[scale=0.9]

    \begin{scope}[xshift=-3.25cm]
    
        \begin{scope}[yshift=0.65cm]
            \node at (-3.25,0) {$\Graph_r$};
            \node[vnode] (v1) at (-0.75,0){$\ubar$};
            \node[vnode] (v2) at (0.75,0){$\vbar$};
        \end{scope}
        \draw[networkedge] (v1) to (v2);
    
        \begin{scope}[yshift=-0.65cm]
            \node at (-3.25,0) {$\Graph_s$};
            \node[snode] (u1) at (-2,0){$u_1$};
            \node[snode] (u2) at (-0.66,0){$u_2$};
            \node[snode] (u3) at (0.66,0)
            {$u_3$};
            \node[snode] (u4) at (2,0){$u_4$};
        \end{scope}
        \draw[networkedge] (u1) to (u2);
        \draw[networkedge] (u2) to (u3);
        \draw[networkedge] (u3) to (u4);
        
    \end{scope}

    \begin{scope}[xshift=3.25cm]
    \node[vnode] (v1) at (0,1.){$\ubar$};
    \node[vnode] (v2) at (0,-1.){$\vbar$};

    \node[snode] (u1) at (-2.,0){$u_1$};
    \node[snode] (u2) at (-0.66,0){$u_2$};
    \node[snode] (u3) at (0.66, 0){$u_3$};
    \node[snode] (u4) at (2.,0){$u_4$};
        
        \draw[networkedgedir] (v1)  to (u1);
        \draw[networkedgedir] (v1) to (u2);
        \draw[networkedgedir] (v1) to (u3);
        \draw[networkedgedir] (v1) to (u4);
    
        \draw[networkedgedir] (u1) to (v2);
        \draw[networkedgedir] (u2) to (v2);
        \draw[networkedgedir] (u3) to (v2);
        \draw[networkedgedir] (u4) to (v2);
        
        \draw[networkedgedir] (u1) to[bend right=10] (u2);
        \draw[networkedgedir] (u2) to[bend right=10] (u1);
        \draw[networkedgedir] (u2) to[bend right=10] (u3);
        \draw[networkedgedir] (u3) to[bend right=10] (u2);
        \draw[networkedgedir] (u3) to[bend right=10] (u4);
        \draw[networkedgedir] (u4) to[bend right=10] (u3);
        
    \end{scope}
    
\end{tikzpicture}
\caption{Example of the augmented network}
\label{fig:augmented-network}
\end{figure}

Compared to a classical flow problem, the VNE has additional one-to-one node placement capacities, which break the total unimodular structure of the constraints matrix (see example below). We can still however use the Flow Decomposition Theorem \cite{ahuja1988flow}, which shows that any arc-flow can be easily decomposed into a path-flow. Since the fractional variables $(\xfrac,\yfrac)$ correspond to a $\ubar-\vbar$ arc-flow, it translates into the following for VNE with a single virtual edge:

\begin{proposition}
    Any fractional solution $(\tilde{x}, \tilde{y})$ to (FF) can be decomposed into $\ubar-\vbar$ paths $P$ and cycles $C$ of the augmented network $\Graph^a_s$:
    \begin{equation*}
        (\tilde{x}, \tilde{y}) = \sum_{p \in P} \lambda_p \chi(p) + \sum_{c \in C} \lambda_c \chi(c)
    \end{equation*}
    such that $\sum_{p \in P} \lambda_p = 1$. 
\end{proposition}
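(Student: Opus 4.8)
The plan is to recognize the fractional point $(\tilde{x}, \tilde{y})$ as a genuine (fractional) $\ubar$-$\vbar$ flow of value one on the augmented network $\Graph_s^a$, and then to apply the Flow Decomposition Theorem \cite{ahuja1988flow} verbatim. Concretely, I would read $\tilde{x}_{\ubar u}$ as the flow carried on the augmenting arc $(\ubar, u)$, $\tilde{x}_{\vbar u}$ as the flow on the arc $(u, \vbar)$, and $\tilde{y}_{\ebar e}$ as the flow on the bidirected substrate arc $e$. All of these are nonnegative because $(\tilde{x}, \tilde{y})$ lies in the linear relaxation of $(FF)$, so the only thing left to check is that conservation and the source/sink balance hold.

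First I would verify flow conservation at every internal node $u \in V_s$. In $\Graph_s^a$ the arcs incident to $u$ are exactly the incoming arc $(\ubar, u)$, the outgoing arc $(u, \vbar)$, and the bidirected substrate arcs in $\delta^-(u)$ and $\delta^+(u)$. Hence conservation at $u$ reads $\tilde{x}_{\ubar u} + \sum_{e \in \delta^-(u)} \tilde{y}_{\ebar e} = \tilde{x}_{\vbar u} + \sum_{e \in \delta^+(u)} \tilde{y}_{\ebar e}$, which is precisely a rearrangement of constraint (\ref{ff2}). The node $\ubar$ has only outgoing arcs and no incoming arc, so it is a pure source whose net outflow is $\sum_{u \in V_s} \tilde{x}_{\ubar u} = 1$ by (\ref{ff1}); symmetrically $\vbar$ is a pure sink whose net inflow is $\sum_{u \in V_s} \tilde{x}_{\vbar u} = 1$. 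Thus $(\tilde{x}, \tilde{y})$ is a nonnegative $\ubar$-$\vbar$ flow of value one.

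Having established this, the Flow Decomposition Theorem applies directly: any such flow decomposes as $\sum_{p \in P} \lambda_p \chi(p) + \sum_{c \in C} \lambda_c \chi(c)$, where each $p$ is a directed path from source to sink and each $c$ is a directed cycle, with $\lambda_p, \lambda_c \ge 0$. Because $\ubar$ has no incoming arc and $\vbar$ has no outgoing arc, no directed cycle can pass through either of them, so every path $p$ is a genuine $\ubar$-$\vbar$ path in $\Graph_s^a$ and every cycle $c$ lies entirely in $\Graph'_s$.

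It remains to show $\sum_{p \in P} \lambda_p = 1$. Summing the flow on the arcs leaving $\ubar$ over the whole decomposition, each cycle contributes zero (cycles avoid $\ubar$) and each path contributes $\lambda_p$ exactly once, through its first arc, so $\sum_{p \in P} \lambda_p$ equals the net outflow at $\ubar$, which is the flow value one. I expect the only delicate point to be the bookkeeping that identifies constraint (\ref{ff2}) with conservation and correctly treats $\ubar$ and $\vbar$ as a pure source and sink; once the flow interpretation is nailed down, both the decomposition and the normalization $\sum_p \lambda_p = 1$ are immediate consequences of the classical theorem.
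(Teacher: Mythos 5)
Your proposal is correct and matches the paper's approach exactly: the paper also establishes this proposition by observing that $(\tilde{x},\tilde{y})$ is a $\ubar$--$\vbar$ arc-flow of value one on the augmented network $\Graph^a_s$ (with constraint (\ref{ff2}) playing the role of flow conservation and (\ref{ff1}) giving the source/sink balance) and then invoking the Flow Decomposition Theorem of \cite{ahuja1988flow}. In fact the paper treats this as immediate and gives no detailed argument, so your write-up simply fills in the bookkeeping — the flow interpretation of the $x$ variables, the fact that cycles avoid $\ubar$ and $\vbar$, and the normalization $\sum_{p\in P}\lambda_p = 1$ — all of which is sound.
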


The \textit{valid paths} of the augmented network $\Graph^a_s$ are paths that use at least three edges of $E^a_s$ (or one in the original substrate network). Thus, a valid path is an integer solution (a feasible mapping). This allows us to devise a tool for characterizations of the VNE polytope when the virtual network is a single edge.

\begin{proposition} \label{prop:integral-property}
    Consider a set of valid inequalities $Ax \le b$.
    If, for any fractional solution of the Flow Formulation reinforced with the inequalities, there exists a $\ubar-\vbar$ path and cycle decomposition on the augmented network, such that all the paths are valid and there are no cycles, then (FF) with these inequalities characterizes the VNE polytope, i.e:
    \begin{equation*}
        \Fpolytope(\Graph_r, \Graph_s, d, c) = \{x, y \in [0, 1], s.t.\ \ref{ff1} - \ref{ff5}, Ax \le b \}
    \end{equation*}
\end{proposition}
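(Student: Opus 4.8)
The plan is to establish the claimed set equality through the two usual inclusions, using the hypothesis only for the harder one. Let me denote by $Q$ the polyhedron on the right-hand side of the claimed identity, i.e.\ the set of $(x,y)$ with entries in $[0,1]$ satisfying (\ref{ff1})--(\ref{ff5}) together with $Ax \le b$.

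For the inclusion $\Fpolytope \subseteq Q$ I would argue by validity and convexity, independently of the hypothesis. Every constraint defining $Q$ holds at the incidence vector $\chi(m)$ of each feasible mapping $m \in \Mapping$: the Flow Formulation constraints by definition of a feasible mapping, and $Ax \le b$ because these inequalities are assumed valid. As $Q$ is an intersection of halfspaces it is convex, hence it contains $conv\{\chi(m) : m \in \Mapping\} = \Fpolytope$.

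The substantive inclusion is $Q \subseteq \Fpolytope$, and here I would invoke the hypothesis. Take any $(\tilde{x}, \tilde{y}) \in Q$. Since we are in the single-edge case, $(\tilde{x}, \tilde{y})$ is a $\ubar-\vbar$ arc-flow of value one in the augmented network $\Graph^a_s$, so the flow-decomposition result above applies; by assumption it can be chosen so that every path is valid, there are no cycles, and $(\tilde{x}, \tilde{y}) = \sum_{p \in P} \lambda_p \chi(p)$ with $\lambda_p \ge 0$ and $\sum_{p \in P} \lambda_p = 1$. The key remaining step is to recognise that each valid path is itself a feasible mapping: a valid path $\ubar \to u_1 \to \dots \to u_k \to \vbar$ uses at least one original substrate edge, so $u_1 \ne u_k$; it therefore places $\ubar$ on $u_1$ and $\vbar$ on the distinct node $u_k$ (fulfilling the one-to-one constraint (\ref{ff3})) and routes $\ebar$ along a single loop-free substrate sub-path, on which the capacity constraint (\ref{ff5}) reduces to $d_{\ebar} \le c_e$ and thus holds by the preprocessing. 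Hence $\chi(p) = \chi(m_p)$ for some $m_p \in \Mapping$, and $(\tilde{x}, \tilde{y})$ is exhibited as a convex combination of incidence vectors of feasible mappings, so $(\tilde{x}, \tilde{y}) \in \Fpolytope$. Combining the two inclusions yields $\Fpolytope = Q$.

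I expect the proof of this proposition itself to be short, since the combinatorial difficulty is packaged into the hypothesis. The only delicate point inside the argument is the identification of valid paths with feasible mappings, where all four constraint families must be checked and where the ``at least three arcs'' clause in the definition of a valid path is exactly what guarantees the one-to-one placement. The genuine obstacle lies one level up, when the proposition is applied: proving, for a concrete family $Ax \le b$, that every fractional solution admits a cycle-free decomposition into valid paths.
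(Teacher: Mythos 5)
Your proposal is correct and follows essentially the same route as the paper: the paper's (one-line) proof likewise observes that the hypothesized decomposition expresses any point of the relaxation as a convex combination of feasible integer solutions, so that the polyhedron has only integral extreme points and hence equals $\Fpolytope$. You simply make explicit what the paper leaves implicit --- the easy inclusion $\Fpolytope \subseteq Q$ via validity, and the identification of valid paths (at least three arcs, hence distinct endpoints and a loop-free substrate subpath satisfying capacities by preprocessing) with feasible mappings --- which is a welcome, but not substantively different, elaboration.
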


\begin{proof}
    In that case, any fractional solution $(\xfrac, \yfrac)$ can be expressed as a convex combination of feasible integer solutions, which implies that all extreme points of $\{x, y \in [0, 1], s.t.\ \ref{ff1} - \ref{ff5}, Ax \ge b \}$ are integral. \qed
\end{proof}

Figure~\ref{fig:ex-bad-relax2} illustrates an example of a flow decomposition (right) of a fractional point (left) for a simple edge on path instance. The fractional point $(\xfrac, \yfrac)$ is the following: $\xfrac_{\ubar u_1} = \xfrac_{\ubar u_4} = \xfrac_{\vbar u_1} = \xfrac_{\vbar u_4} = 0.5,\ \xfrac_{\ubar u_2} = \xfrac_{\ubar u_3} = \xfrac_{\vbar u_2} = \xfrac_{\vbar u_3} = 0$,\\
$\yfrac_{\ebar (u_1, u_2)} = \yfrac_{\ebar (u_2, u_1)} = \yfrac_{\ebar (u_3, u_4)} = \yfrac_{\ebar (u_4, u_3)}  = 0.5$,\ $\yfrac_{\ebar (u_2, u_3)} = \yfrac_{\ebar (u_3, u_2)} = 0$.

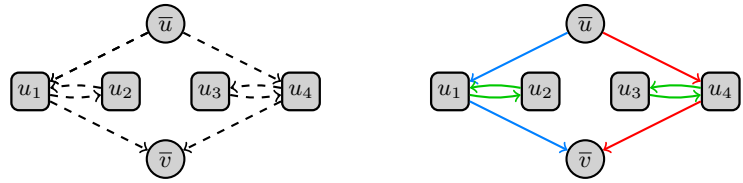
\begin{figure}[h]
\centering
\begin{tikzpicture}[scale=0.9]

    \begin{scope}
    
    \begin{scope}[xshift=-3.1cm]
    \node[vnode] (v1) at (0,1.){$\ubar$};
    \node[vnode] (v2) at (0,-1.){$\vbar$};

    \node[snode] (u1) at (-2.,0){$u_1$};
    \node[snode] (u2) at (-0.66,0){$u_2$};
    \node[snode] (u3) at (0.66, 0){$u_3$};
    \node[snode] (u4) at (2.,0){$u_4$};

    \draw[networkedgedir, black, dashed] (v1) to (u1);
    \draw[networkedgedir, black, dashed] (v1) to (u4);
    \draw[networkedgedir, black, dashed] (u1) to (v2);
    \draw[networkedgedir, black, dashed] (u4) to (v2);
        \draw[networkedgedir, black, dashed] (v1) to (u1);

    \draw[networkedgedir, black, dashed] (u1) to[bend right = 10] (u2);
    \draw[networkedgedir, black, dashed] (u2) to[bend right = 10] (u1);

    \draw[networkedgedir, black, dashed] (u4) to[bend right = 10] (u3);
    \draw[networkedgedir, black, dashed] (u3) to[bend right = 10] (u4);
    
    \end{scope}

    \begin{scope}[xshift=3.1cm]

    \node[vnode] (v1) at (0,1.){$\ubar$};
    \node[vnode] (v2) at (0,-1.){$\vbar$};

    \node[snode] (u1) at (-2.,0){$u_1$};
    \node[snode] (u2) at (-0.66,0){$u_2$};
    \node[snode] (u3) at (0.66, 0){$u_3$};
    \node[snode] (u4) at (2.,0){$u_4$};

    \draw[networkedgedir, blue] (v1) to (u1);
    \draw[networkedgedir, blue] (u1) to (v2);

    \draw[networkedgedir, red] (v1) to (u4);
    \draw[networkedgedir, red] (u4) to (v2);

    \draw[networkedgedir, green] (u1) to[bend right = 10] (u2);
    \draw[networkedgedir, green] (u2) to[bend right = 10] (u1);

    \draw[networkedgedir, green] (u4) to[bend right = 10] (u3);
    \draw[networkedgedir, green] (u3) to[bend right = 10] (u4);
    
        
    \end{scope}
    \end{scope}

\end{tikzpicture}
\caption{Example of a flow decomposition}
\label{fig:ex-bad-relax2}
\end{figure}

The unique path and cycle decomposition is the following. It contains the two paths: $p_1 = (\ubar, u_1, \vbar)$ (blue on the figure) and $p_2 = (\ubar, u_4, \vbar)$ (red on the figure), with $\lambda_{p_1} = \lambda_{p_2} = 0.5$; as well as two 2-node cycles (green on the figure).

This fractional point respects the flow departure inequalities. Since it contains non-valid paths, this example shows that (FF) with the flow departure inequalities is not integral for the case of a virtual edge on a substrate path. Note that, here, adding the flow continuity inequalities removes any such fractional solution. In the next section, we show that this is indeed always the case.

\section{Characterization of the polytope for a substrate path}

We now show that the Flow Formulation, with the flow departure and the flow continuity inequalities, characterizes the VNE polytope when the substrate network is a path $\Path_s$. 

Consider a fractional solution $(\tilde{x},\tilde{y})$.
We have to construct a flow decomposition of  $(\tilde{x},\tilde{y})$ with only valid paths to prove the result. 
We found that the usual flow decomposition algorithm (described in \cite{ahuja1988flow}) does not permit this. Thus, we provide our own flow decomposition algorithm, called ComputeFlowDecomposition. Given the fractional solution, this Algorihm~\ref{algo:path-decompo} returns a path-cycle decomposition $(P, C, \lambda)$. 

\begin{algorithm}[t!]
\caption{ComputeFlowDecomposition}\label{algo:path-decompo}
\begin{algorithmic}[1]
\Require A fractional solution $(\xfrac, \yfrac)$
\Ensure A flow decomposition $(P, C, \lambda)$ where all paths of $P$ are valid and $C = \emptyset$

\State Initialize empty arrays $P, C, \lambda$
\State Initialize arrays $(\xfrac', \yfrac') \gets (\xfrac, \yfrac)$
\For{$u_i \in (u_1,\ldots, u_{n_s-1} )$}  \label{algo:stage1} \Comment{Forward-mappings}
    \State $j \gets i + 1$
    \While{ $\yfrac'_{\ebar (u_i, u_{i+1})} > 0$ }
        \State $\delta \gets \min(\ \yfrac'_{\ebar (u_i, u_{i+1})},\ \xfrac'_{\vbar u_j} - \max(\yfrac'_{\ebar (u_{j+1}, u_j)} -\ \yfrac'_{\ebar (u_j, u_{j-1})},\ 0))$ \label{eq:algo}
        \If{$\delta > 0$}
            \State $p \gets (\ubar, u_i, \ldots, u_j, \vbar)$
            \State $P \gets P \cup p$
            \State $\lambda_p \gets \delta$
            \State $(\tilde{x}', \tilde{y}') \gets (\tilde{x}', \tilde{y}') -  \delta \cdot \chi(p)$
        \EndIf
        \State $j \gets j + 1$
    \EndWhile
\EndFor
\For{$u_i \in (u_{n_s},\ldots, u_2)$} \Comment{Backward-mappings} \label{algo:stage2}
    \State $j \gets i - 1$
    \While{ $\xfrac'_{\ubar u_i} > 0$ }
        \State $\delta \gets \xfrac'_{\vbar u_j} $
        \If{$\delta > 0$}
            \State $p \gets (\ubar, u_i, \ldots, u_j, \vbar)$
            \State $P \gets P \cup p$
            \State $\lambda_p \gets \delta$
            \State $(\tilde{x}', \tilde{y}') \gets (\tilde{x}', \tilde{y}') - \delta \cdot \chi(p)$
        \EndIf
        \State $j \gets j - 1$
    \EndWhile
\EndFor

\State \Return $(P, C, \lambda)$
\end{algorithmic}
\end{algorithm}

The algorithm incrementally builds a set of valid paths by first building \textit{forward-oriented} paths (paths that start at a node $u_i \in V_s$ and finish at $u_j$ such that $i < j$) in the For Loop of Line~\ref{algo:stage1}; and then \textit{backwards-oriented} paths in the For Loop of Line~\ref{algo:stage2}. The algorithm relies on a \textit{residual fractional solution} $(\xfrac', \yfrac')$, which, at any step of procedure, corresponds to $(\xfrac, \yfrac) - \sum_{p \in P} \lambda_p \chi(p)$. 

An important property is that the residual fractional solution always satisfies the flow departure and continuity inequalities, which guarantees that the algorithm can always progress. This property is maintained during the forward-oriented path phase as follows.
At a node $u_l \in V_s$, enough residual variable $\xfrac'_{\vbar u_l}$ must be preserved to allow the construction of backward-oriented mappings later. Specifically, the required amount is $\yfrac'_{\ebar (u_{l+1}, u_l)} - \yfrac'_{\ebar (u_l, u_{l-1})}$ (if greater than 0): this is enforced by Line~\ref{eq:algo}. \\




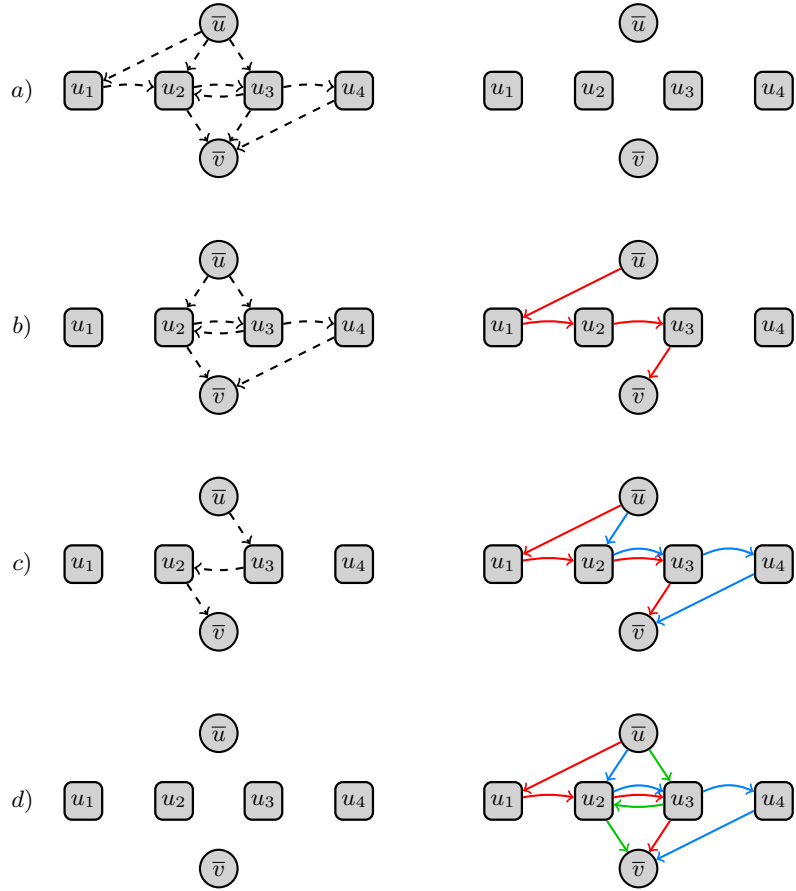
\begin{figure}[t!]
\centering
\begin{tikzpicture}[scale=0.9]

    \begin{scope}[yshift=5cm]
    \node at (-6, 0){$a)$};
    
    \begin{scope}[xshift=-3.1cm]
    \node[vnode] (v1) at (0,1.){$\ubar$};
    \node[vnode] (v2) at (0,-1.){$\vbar$};

    \node[snode] (u1) at (-2.,0){$u_1$};
    \node[snode] (u2) at (-0.66,0){$u_2$};
    \node[snode] (u3) at (0.66, 0){$u_3$};
    \node[snode] (u4) at (2.,0){$u_4$};

    \draw[networkedgedir, black, dashed] (v1) to (u1);
    \draw[networkedgedir, black, dashed] (v1) to (u2);
    \draw[networkedgedir, black, dashed] (v1) to (u3);
    \draw[networkedgedir, black, dashed] (u1) to[bend left = 10] (u2);
    \draw[networkedgedir, black, dashed] (u2) to[bend left = 10] (u3);
    \draw[networkedgedir, black, dashed] (u3) to[bend left = 10] (u4);
    \draw[networkedgedir, black, dashed] (u3) to[bend left = 10] (u2);
    \draw[networkedgedir, black, dashed] (u2) to (v2);
    \draw[networkedgedir, black, dashed] (u3) to (v2);
    \draw[networkedgedir, black, dashed] (u4) to (v2);
    
    \end{scope}

    \begin{scope}[xshift=3.1cm]

    \node[vnode] (v1) at (0,1.){$\ubar$};
    \node[vnode] (v2) at (0,-1.){$\vbar$};

    \node[snode] (u1) at (-2.,0){$u_1$};
    \node[snode] (u2) at (-0.66,0){$u_2$};
    \node[snode] (u3) at (0.66, 0){$u_3$};
    \node[snode] (u4) at (2.,0){$u_4$};

        
    \end{scope}
    \end{scope}

    \begin{scope}[yshift=1.5cm]
    \node at (-6, 0){$b)$};
    
    \begin{scope}[xshift=-3.1cm]
    \node[vnode] (v1) at (0,1.){$\ubar$};
    \node[vnode] (v2) at (0,-1.){$\vbar$};

    \node[snode] (u1) at (-2.,0){$u_1$};
    \node[snode] (u2) at (-0.66,0){$u_2$};
    \node[snode] (u3) at (0.66, 0){$u_3$};
    \node[snode] (u4) at (2.,0){$u_4$};

    \draw[networkedgedir, black, dashed] (v1) to (u2);
    \draw[networkedgedir, black, dashed] (v1) to (u3);
    \draw[networkedgedir, black, dashed] (u2) to[bend left = 10] (u3);
    \draw[networkedgedir, black, dashed] (u3) to[bend left = 10] (u4);
    \draw[networkedgedir, black, dashed] (u3) to[bend left = 10] (u2);
    \draw[networkedgedir, black, dashed] (u2) to (v2);
    \draw[networkedgedir, black, dashed] (u4) to (v2);
    
    \end{scope}

    \begin{scope}[xshift=3.1cm]

    \node[vnode] (v1) at (0,1.){$\ubar$};
    \node[vnode] (v2) at (0,-1.){$\vbar$};

    \node[snode] (u1) at (-2.,0){$u_1$};
    \node[snode] (u2) at (-0.66,0){$u_2$};
    \node[snode] (u3) at (0.66, 0){$u_3$};
    \node[snode] (u4) at (2.,0){$u_4$};

    \draw[networkedgedir, red] (v1) to (u1);
    \draw[networkedgedir, red] (u1) to[bend left = 10] (u2);
    \draw[networkedgedir, red] (u2) to[bend left = 10] (u3);
    \draw[networkedgedir, red] (u3) to[bend left = 0] (v2);

    \end{scope}
    \end{scope}

    \begin{scope}[yshift=-2.cm]
    \node at (-6, 0){$c)$};
    
    \begin{scope}[xshift=-3.1cm]
    \node[vnode] (v1) at (0,1.){$\ubar$};
    \node[vnode] (v2) at (0,-1.){$\vbar$};

    \node[snode] (u1) at (-2.,0){$u_1$};
    \node[snode] (u2) at (-0.66,0){$u_2$};
    \node[snode] (u3) at (0.66, 0){$u_3$};
    \node[snode] (u4) at (2.,0){$u_4$};

    \draw[networkedgedir, black, dashed] (v1) to (u3);
    \draw[networkedgedir, black, dashed] (u3) to[bend left = 10] (u2);
    \draw[networkedgedir, black, dashed] (u2) to (v2);
    
    \end{scope}

    \begin{scope}[xshift=3.1cm]

    \node[vnode] (v1) at (0,1.){$\ubar$};
    \node[vnode] (v2) at (0,-1.){$\vbar$};

    \node[snode] (u1) at (-2.,0){$u_1$};
    \node[snode] (u2) at (-0.66,0){$u_2$};
    \node[snode] (u3) at (0.66, 0){$u_3$};
    \node[snode] (u4) at (2.,0){$u_4$};

    \draw[networkedgedir, red] (v1) to (u1);
    \draw[networkedgedir, red] (u1) to[bend left = 10] (u2);
    \draw[networkedgedir, red] (u2) to[bend left = 10] (u3);
    \draw[networkedgedir, red] (u3) to[bend left = 0] (v2);

    \draw[networkedgedir, blue] (v1) to (u2);
    \draw[networkedgedir, blue] (u2) to[bend left = 25] (u3);
    \draw[networkedgedir, blue] (u3) to[bend left = 25] (u4);
    \draw[networkedgedir, blue] (u4) to[bend left = 0] (v2);

    \end{scope}
    \end{scope}

    \begin{scope}[yshift=-5.5cm]
    \node at (-6, 0){$d)$};
    
    \begin{scope}[xshift=-3.1cm]
    \node[vnode] (v1) at (0,1.){$\ubar$};
    \node[vnode] (v2) at (0,-1.){$\vbar$};

    \node[snode] (u1) at (-2.,0){$u_1$};
    \node[snode] (u2) at (-0.66,0){$u_2$};
    \node[snode] (u3) at (0.66, 0){$u_3$};
    \node[snode] (u4) at (2.,0){$u_4$};

    
    \end{scope}

    \begin{scope}[xshift=3.1cm]

    \node[vnode] (v1) at (0,1.){$\ubar$};
    \node[vnode] (v2) at (0,-1.){$\vbar$};

    \node[snode] (u1) at (-2.,0){$u_1$};
    \node[snode] (u2) at (-0.66,0){$u_2$};
    \node[snode] (u3) at (0.66, 0){$u_3$};
    \node[snode] (u4) at (2.,0){$u_4$};

    \draw[networkedgedir, red] (v1) to (u1);
    \draw[networkedgedir, red] (u1) to[bend left = 10] (u2);
    \draw[networkedgedir, red] (u2) to[bend left = 10] (u3);
    \draw[networkedgedir, red] (u3) to[bend left = 0] (v2);

    \draw[networkedgedir, blue] (v1) to (u2);
    \draw[networkedgedir, blue] (u2) to[bend left = 25] (u3);
    \draw[networkedgedir, blue] (u3) to[bend left = 25] (u4);
    \draw[networkedgedir, blue] (u4) to[bend left = 0] (v2);

    \draw[networkedgedir, green] (v1) to (u3);
    \draw[networkedgedir, green] (u3) to[bend left = 10] (u2);
    \draw[networkedgedir, green] (u2) to (v2);

    \end{scope}
    \end{scope}

\end{tikzpicture}
\caption{Steps of the algorithm on a 4-node substrate path. On the left, the residual fractional variables; on the right, the valid paths $P$ constructed by the algorithm, a) before the algorithm, b) after the first stage on $u_1$, c) after the first stage on $u_2$, d) after the second stage on $u_3$ (algorithm is completed).}
\label{fig:algo-path}
\end{figure}

We now give an example. Figure~\ref{fig:algo-path} shows the successive steps of Algorithm~\ref{algo:path-decompo} for a 4-node path and a fractional solution $(\xfrac, \yfrac)$, which is the following:\\
\noindent  $\xfrac_{\ubar u_1} = 0.4,\ \xfrac_{\ubar u_2} = 0.3,\ \xfrac_{\ubar u_3} = 0.3,\ \xfrac_{\ubar u_4} = 0.$, \\
\noindent $\xfrac_{\vbar u_1} = 0.,\ \xfrac_{\vbar u_2} = 0.3,\ \xfrac_{\vbar u_3} = 0.4,\ \xfrac_{\vbar u_4} = 0.3$, \\
\noindent $\yfrac_{\ebar (u_1, u_2)} = 0.4,\ \yfrac_{\ebar (u_2, u_1)} = 0.,\ \yfrac_{\ebar (u_2, u_3)} = 0.7,\ \yfrac_{\ebar (u_3, u_2)} = 0.3$, \\
\noindent $\yfrac_{\ebar (u_3, u_4)} = 0.3,\ \yfrac_{\ebar (u_4, u_3)} = 0.$. 

At each step, the non-zero residual fractional variables are shown on the left of the figure, and the valid paths built by the algorithm on the right. At the end, three paths are built: $p_1 = (\ubar, u_1, u_2, u_3, \vbar)$ (in red in the figure), $p_2 = (\ubar, u_2, u_3, u_4, \vbar)$ (in blue), and $p_3 = (\ubar, u_3, u_2, \vbar)$ (in green), while all residual fractional variables have null values.
Note that since $\yfrac_{\ebar (u_3, u_2)} = 0.3$ and $\yfrac_{\ebar (u_2, u_1)} = 0.$, the path $(\ubar, u_1, u_2, \vbar)$ is not built. 
The paths, with $\lambda_{p_1} = 0.4$, $\lambda_{p_2} = 0.3$, and $\lambda_{p_3} = 0.3$, yield a path decomposition of the fractional point such that all paths are valid, with no cycles.  \\

In what follows, we prove that the algorithm works. First, note that the residual variables $\xfrac'$ are always positive (direct), and that, at any point of the algorithm, 
\begin{equation} \label{eq:suwu}
    \sum_{u \in V_s} \xfrac'_{\ubar u} = \sum_{u \in V_s} \xfrac'_{\vbar u} = 1 - \sum_{p \in P} \lambda_p
\end{equation}
We now look at the first path constructed and observe a few useful properties. We then consider later steps of the algorithm, for which the observations easily extend. This permits us to show the algorithm's validity and termination. \\

\noindent \textbf{First path constructed:}
To begin, we will consider the iterations of the algorithm on the first node $u_1$. We suppose that $\xfrac_{\ubar u_1} > 0$ (in the other case, our arguments can be easily adapted) and consider the first path built by the algorithm, say $p = (\ubar, u_1, \ldots, u_j, \vbar)$, with $u_j \in \{u_2, \ldots, u_{n_s -1} \}$. The flow continuity inequalities ensure that such a path exists. We show a series of useful claims for that first iteration.

\begin{myclaim}
    \begin{equation}\label{eq:horrible}
        \yfrac_{\ebar (u_{1}, u_{2})} \le \yfrac_{\ebar (u_l, u_{l+1})},\ \text{for } l \in \{2, \ldots, j-1\}
    \end{equation}
\end{myclaim}

\begin{proof}
We show that $\yfrac_{\ebar (u_{1}, u_{2})} \le \yfrac_{\ebar (u_2, u_{3})}$ (in the case where no path was built finishing on $u_2$). Since no path was built on the node, based on Line~\ref{eq:algo} of the algorithm, two cases are possible:
\begin{itemize}
    \item $\xfrac_{\vbar u_2} = 0$: the result follows from the flow continuity inequality on $(u_{1}, u_2)$.
    \item $\xfrac_{\vbar u_2} \le \yfrac_{\ebar (u_{3} u_2)} - \yfrac_{\ebar (u_{2} u_{1})}$: injecting this in the flow conservation equality on node $u_2$, we obtain $\yfrac_{\ebar (u_{2}, u_{3})} \ge \yfrac_{\ebar (u_{1}, u_{2})} + \xfrac_{\ubar u_2} \ge \yfrac_{\ebar (u_{1}, u_{2})}$.
\end{itemize}
On the following substrate nodes, the proof is analogous. \qed
\end{proof}

\begin{myclaim}
    \begin{equation}
        \yfrac_{\ebar (u_{1}, u_{2})} - \lambda_p \le \yfrac_{\ebar (u_j, u_{j+1})}
    \end{equation}
\end{myclaim}   

\begin{proof}
Based on Line~\ref{eq:algo}, there are three possible cases:
\begin{itemize}
    \item $\lambda_p = \yfrac_{\ebar(u_{1} u_2)}$: direct.
    \item $\lambda_p = \xfrac_{\vbar u_j}$: the result follows from the flow continuity inequality on $(u_{j-1}, u_j)$, and from Equation~(\ref{eq:horrible}).
    \item $\lambda_p = \xfrac_{\vbar u_j} - (\yfrac_{\ebar (u_{j+1} u_j)} - \yfrac_{\ebar (u_{j} u_{j-1})})$: injecting in the flow conservation equality on node $u_j$, we obtain $
        \yfrac_{\ebar (u_{j}, u_{j+1})} = \yfrac_{\ebar (u_{j-1}, u_{j})} - \lambda_p + \xfrac'_{\ubar u_j}$. 
    The result is obtained using Equation~(\ref{eq:horrible}). \qed
\end{itemize}
\end{proof}

The two previous claims mean that the residual $\yfrac'$ variables are still greater than 0 after the substitution of $\lambda_p \chi(p)$. Next, we prove that the additional inequalities are also valid for the residual variables.

\begin{myclaim}
    The flow continuity inequalities remain valid for $(\xfrac, \yfrac) - \lambda_p \chi(p)$.
\end{myclaim}
\begin{proof}
We show this for each inequality:
\begin{itemize}
    \item On $(u_l, u_{l+1})$, for $u_l \in \{u_1, \ldots, u_{j-1} \}$: $\lambda_p$ is removed from both the left-hand side (LHS) and the right-hand side (RHS) of the inequality. 
    \item On $(u_{j-1}, u_j)$: $\lambda_p$ is removed only from the LHS.
    \item On $(u_{j+1}, u_{j})$: due to Line~\ref{eq:algo}, 
        \begin{align*}
            \lambda_p \leq \xfrac_{\vbar u_j} - (\yfrac_{\ebar (u_{j+1}, u_{j})} - \yfrac_{\ebar (u_{j}, u_{j-1})}) \\
            \Leftrightarrow\ \yfrac_{\ebar (u_{j+1}, u_{j})} \leq \xfrac_{\vbar u_j} - \lambda_p + \yfrac_{\ebar (u_{j}, u_{j-1})}
        \end{align*}
    \item On other edges, the inequality is not affected. \qed
\end{itemize}
\end{proof}

\begin{myclaim}
    The flow departure inequalities remain valid for $(\xfrac, \yfrac) - \lambda_p \chi(p)$.
\end{myclaim}

\begin{proof}
    We show this for each inequality:
    \begin{itemize}
        \item On $u_1$: $\lambda_p$ is removed from both LHS and RHS of the inequality. 
        \item On $u_l \in u_{2}, \ldots, u_{j-1}$: since no path was built finishing at $u_l$, then, from Line~\ref{eq:algo} of the algorithm, either:
        \begin{itemize}
            \item $\xfrac_{\vbar u_l} = 0$: the flow conservation equality on $u_l$ becomes
            \[
                \xfrac_{\ubar u_l} + \yfrac_{\ebar (u_{l-1} u_l)} + \yfrac_{\ebar (u_{l+1} u_{l})}  = \yfrac_{\ebar (u_{l} u_{l+1})} + \yfrac_{\ebar (u_{l} u_{l-1})}
            \]
            By Equation~(\ref{eq:horrible}), $\lambda_p \leq \yfrac_{\ebar (u_1, u_{2})} \leq \yfrac_{\ebar (u_{l-1}, u_{l})}$. We obtain:
            \[
                \xfrac_{\ubar u_l} \le \xfrac_{\ubar u_l} +  \yfrac_{\ebar (u_{l+1} u_{l})}  \le \yfrac_{\ebar (u_{l} u_{l+1})} + \yfrac_{\ebar (u_{l} u_{l-1})} - \lambda_p
            \]
            \item $\xfrac_{\vbar u_l} \le \yfrac_{\ebar (u_{l+1} u_{l})} - \yfrac_{\ebar (u_l, u_{l-1})}$: injecting in the flow conservation constraint on $u_l$,
            \[
                \xfrac_{\ubar u_l} + \yfrac_{(u_{l-1} u_l)} \le \yfrac_{(u_{l} u_{l+1})}
            \] 
            Again, using $\lambda_p \leq \yfrac_{\ebar (u_{l-1}, u_{l})}$, we obtain the result.
        \end{itemize}
        \item On other nodes, the inequality is not affected. \qed
    \end{itemize}
\end{proof}

\noindent \textbf{Later iterations on $u_1$:} 
After that, while $\yfrac'_{\ebar (u_1, u_2)} > 0$, the algorithm continues to construct paths starting with $(\ubar, u_1)$. All the previous claims easily extend to these new paths. 
Thus, such a path can always be constructed, since the flow continuity inequalities remain valid for the residual fractional variables.
The residual $\yfrac'$ variables will always remain positive. Eventually, the iterations on nodes $u_1$ are completed, at the latest at node $u_{n_s}$, since $\yfrac_{\ebar (u_1, u_2)} \le \yfrac_{\ebar (u_{n_s-1}, u_{n_s})} = \xfrac_{\vbar u_{n_s}}$ (by Equation~\ref{eq:horrible}). After that, $\yfrac'_{\ebar (u_1, u_2)} = \xfrac'_{\ubar u_1} = 0$. \\ 

\noindent \textbf{Later nodes of forward-oriented paths phase:} Consider, if it exists, the next node $u_i$ of $u_2, \ldots, u_{n_s}$ such that $\yfrac'_{\ebar (u_i, u_{i+1})} > 0$. Since the flow continuity inequalities remain valid for the residual fractional variables at that point, all the previous claims and observations extend to iterations on that node $u_i$. Thus the algorithm constructs forward paths starting with $(\ubar, u_i)$ until $\yfrac'_{\ebar (u_2, u_3)} = 0$. Note that $\xfrac'_{\ubar u_i}$ may then be non-null. The algorithm continues so on later nodes, until the end of iterations of forward-oriented paths phase, at which point $\yfrac'_{\ebar (u_l, u_{l+1})} = 0$, for all $u_l \in \{u_1, \ldots, u_{n_s-1}\}$.\\  

\noindent \textbf{Backward-oriented paths phase:} The algorithm then iterates over $u_{n_s}, \ldots, u_2$ and construct backward-oriented paths. Consider the first node $u_i$ of those such that, at that point, $\xfrac'_{\ubar u_i} > 0$. Since at that point, the residual fractional solution respects the flow departure and continuity inequalities (Claim~3-4), paths can also be constructed until the variables $\xfrac'_{\ubar u_i}$ and $\yfrac'_{\ebar (u_i, u_{i-1})}$ have null values. \\

At the end of the algorithm, all residual variables will have a null value. The set of valid paths $P$ constructed is then a flow decomposition for solution $(\xfrac, \yfrac)$ such that all paths are valid and there are no cycles, with $\sum_{p \in P} \lambda_p  = 1$ (by Equation~(\ref{eq:suwu})). Using Proposition~\ref{prop:integral-property}, the integrality is proven:

\begin{theorem} \label{thm:integrality-path}
    For a virtual edge $\ebar$ on a substrate path $\Path_s$, the Flow Formulation with the flow departure and continuity inequalities characterizes the VNE polytope:
    \begin{equation*}
        \Fpolytope(\ebar, \Path_s, d, c) = \{x, y \in [0, 1],\ s.t.\ (\ref{ff1}) - (\ref{ff5}),\ (\ref{cons-flow-departure}),\ (\ref{cons-flow-continuity}) \}
    \end{equation*}
\end{theorem}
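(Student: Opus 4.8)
The plan is to invoke Proposition~\ref{prop:integral-property}, which reduces the theorem to a single combinatorial statement: every fractional solution $(\xfrac,\yfrac)$ of (FF) reinforced with the flow departure inequalities (\ref{cons-flow-departure}) and flow continuity inequalities (\ref{cons-flow-continuity}) admits a $\ubar$-$\vbar$ path/cycle decomposition in which all paths are valid and no cycle occurs. Algorithm~\ref{algo:path-decompo} produces a candidate decomposition $(P,C,\lambda)$, so the remaining task is purely to certify that its output has exactly these properties. First I would isolate the maintained \emph{invariant}: throughout execution the residual solution $(\xfrac',\yfrac') = (\xfrac,\yfrac) - \sum_{p\in P}\lambda_p\chi(p)$ stays non-negative and continues to satisfy both families of additional inequalities. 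This is the content of Claims~1--4 proved for the first path; crucially, those claims use only the invariant holding at the \emph{start} of an iteration, so they extend verbatim to every later path and to both phases.

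The heart of the argument is that the reservation term in Line~\ref{eq:algo}, namely $\max(\yfrac'_{\ebar(u_{j+1},u_j)} - \yfrac'_{\ebar(u_j,u_{j-1})},0)$, is precisely calibrated so that subtracting $\lambda_p\chi(p)$ preserves the continuity inequality on the \emph{reversed} arc $(u_{j+1},u_j)$, while Equation~(\ref{eq:horrible}) together with the flow-conservation equalities keeps the departure inequalities valid at the intermediate nodes. Given the invariant, progress and termination follow by a standard counting argument: in the forward phase the continuity inequalities guarantee that, as long as $\yfrac'_{\ebar(u_i,u_{i+1})}>0$, a forward path can be extended to some $u_j$ with positive $\delta$, and Equation~(\ref{eq:horrible}) forces the iterations on each $u_i$ to finish no later than $u_{n_s}$; sweeping $i$ drives every $\yfrac'_{\ebar(u_l,u_{l+1})}$ to $0$. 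The backward phase, now with the departure inequalities in force, drives each remaining $\xfrac'_{\ubar u_i}$ to $0$. Every constructed path runs from some $u_i$ to a \emph{distinct} $u_j$, hence traverses at least one substrate edge and is therefore valid, and the algorithm never inserts a cycle, so $C=\emptyset$. Finally Equation~(\ref{eq:suwu}) gives $1 - \sum_p\lambda_p = \sum_{u}\xfrac'_{\ubar u} = 0$ at termination, so $\sum_p\lambda_p = 1$, and Proposition~\ref{prop:integral-property} closes the proof.

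The hard part will be the invariant, specifically verifying that the subtraction keeps the continuity inequality valid on the reversed arc $(u_{j+1},u_j)$ and keeps the intermediate departure inequalities valid: this is exactly where the non-obvious reservation term of Line~\ref{eq:algo} is indispensable and where one must carefully combine the flow-conservation equalities with Equation~(\ref{eq:horrible}). A secondary subtlety is handling the boundary case $\xfrac_{\ubar u_1}=0$, in which the forward phase starts further along the path, and ensuring the whole argument remains symmetric between the forward and backward phases so that the claims transfer without modification.
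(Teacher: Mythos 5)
Your proposal follows essentially the same route as the paper's own proof: reduce the theorem to the decomposition property via Proposition~\ref{prop:integral-property}, run Algorithm~\ref{algo:path-decompo}, maintain the invariant that the residual solution $(\xfrac',\yfrac')$ stays non-negative and keeps satisfying the flow departure and continuity inequalities (the content of the paper's Claims~1--4), and conclude that at termination all constructed paths are valid, $C=\emptyset$, and $\sum_{p}\lambda_p = 1$ by Equation~(\ref{eq:suwu}). Your identification of the reservation term in Line~\ref{eq:algo} as the crux of preserving the continuity inequality on the reversed arc, and of the boundary case $\xfrac_{\ubar u_1}=0$, matches the paper's treatment exactly, so this is the same proof.
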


\section{Computational results}

We now run preliminary experiments to study the effectiveness of our new inequalities. We use the solver IBM ILOG CPLEX 22.1, on a machine with an Intel\textsuperscript{\textregistered} Xeon\textsuperscript{\textregistered} E5-2650 v3 CPU running at 2.30 GHz. 

The virtual networks are 100 (connected) Erdös–Rényi networks \cite{erdds1959random}, with 14 nodes and 22 edges, with unit demands on every node and edge.
The substrate networks are Intellifiber, Uninett, and TW (70-75 nodes), which are real network topologies from the Internet Topology Zoo \cite{knight2011internet}. The capacities are generated randomly for edges, and for nodes, we consider scenarios where 25\%, 50\%, and 100\% of nodes have a capacity of 1 (the rest 0), to test the efficiency of inequalities in more sparse capacity scenarios. The costs are generated randomly.

\vspace{-0.1cm}

\begin{table*}[h!]
\centering
\scriptsize
\begin{tabular*}{\textwidth}{@{\extracolsep\fill}llrrr@{}}\toprule
    \textbf{Capacities} & \textbf{Constraints} & \textbf{Time} & \textbf{Nodes} & \textbf{v(LP)} \\ \midrule
  0.25 & FF & 46.5  & 11564.6 & 29.6 \\
   & FF + FD & 22.4 & 7440.5 & 72.1 \\
   & FF + FD + FC & 24.6  & 4740.8 &  86.4 \vspace{0.05cm} \\ 
  0.5 & FF & 107.0  & 25638.5 &  20.4 \\
  & FF + FD & 38.4  & 12683.6 &  51.8 \\
   & FF + FD + FC & 65.1  & 9124.3 &  52.7 \vspace{0.05cm} \\
  1.0 & FF & 211.1 & 38124.9 &  14.7 \\
   & FF + FD & 106.6 &  16377.0 &  40.9 \\
   & FF + FD + FC & 224.4 &  12340.2 &  40.9 \\
    \bottomrule
  \end{tabular*}
\caption{Experimental results} \label{tab:abilene}
\end{table*}

\vspace{-0.4cm}

The results are reported in Table \ref{tab:abilene}. Each row gives the average CPLEX performance given a set of inequalities for a given percentage of capacitated nodes (first column).
The remaining columns report solving time (seconds), number of Branch\&Bound nodes, and value of the linear relaxation.

The flow departure (FD) inequalities are very effective, providing 2x to 3x speedups and largerelaxation gains. Note that, when using small real virtual networks such as Abilene and Polska from \cite{knight2011internet}, we obtained even stronger gains (up to 20x). Flow continuity (FC) inequalities also improve B\&B tree sizes and linear relaxations (in the sparse capacity scenario), but overall, they slow down the MIP solver. This is because, although polynomial, the number of these inequalities is large (more than 4000). Adding only a small subset of these inequalities, e.g., through a Branch\&Cut approach, could make them practically effective.

\section{Conclusion}

In this paper, we proposed a first study of the Virtual Network Embedding problem polytope. We introduced new valid inequalities and proved that they characterize the polytope for the case of a virtual edge on a substrate path. Experiments showed that inequalities help MIP solvers. We believe that the theoretical result could be extended to substrate trees by adapting our proof. Then new valid inequalities associated to larger virtual network subgraphs than an edge (e.g., stars) should be studied to further improve MIP performance.

%
%
\bibliographystyle{splncs04}
\bibliography{biblio}

\end{document}